\newtheorem{thm}{Theorem}
\newtheorem{prop}{Proposition}
\newtheorem{lemma}{Lemma}
\newtheorem{defn}{Definition}
\newtheorem{rem}{Remark}
\newcommand{\R}{\mathbb{R}}
\newcommand{\C}{\mathbb{C}}
\newcommand{\s}{\mathbb{S}}
\newcommand{\rme}{\mathrm{e}}
\newcommand{\rmi}{\mathrm{i}}
\newcommand{\rmd}{\mathrm{d}}
\newcommand{\ohbig}{\mathcal{O}}
\newcommand{\norm}[1]{\left\|#1\right\|}
\newcommand{\F}{\mathcal{F}}
\newcommand{\comp}{\text{comp}}
\begin{document}

\title{Inverse fixed energy scattering problem for the two-dimensional nonlinear Schr\"odinger operator}
\author{Georgios Fotopoulos and Valery Serov\\
Department of Mathematical Sciences, University of Oulu, \\ PO Box 3000, FIN-90014 Oulu, Finland\\georgios.fotopoulos@oulu.fi and vserov@cc.oulu.fi}
\date{}
\maketitle

\begin{abstract}
This work studies the direct and inverse fixed energy scattering problem for  the two-dimensional  Schr\"odinger equation with rather general nonlinear index of refraction. In particular, using the Born approximation we prove that all singularities of the unknown compactly supported potential from $L^2$ space can be obtained uniquely by the scattering data with fixed positive energy. The proof is based on the new estimates for the Faddeev Green's function in $L^{\infty} (\R^2)$ space. 
\end{abstract}
Keywords: inverse scattering, nonlinearity, Schr\"odinger equation, fixed energy

Mathematics Subject Classification: 35P25, 35R30

\section{Introduction}
We consider the nonlinear generalized Schr\"odinger equation in $\mathbb{R}^{2}$

\begin{equation}\label{schrodinger}
-\Delta u(x) + h(x,|u(x)|)u(x) = k^{2}u(x), \quad k \in \R
\end{equation}
where $k^2>0$ is fixed and the potential function $h$ has some specific properties which we will mention later.

In order to formulate the scattering data we consider scattering solutions of \eqref{schrodinger} of the form
\[u(x,k,\theta)=u_{0}(x,k,\theta)+u_{sc}(x,k,\theta)
\]
where $u_{0}(x,k,\theta)=\rme^{\rmi k(x,\theta)}$ is the incident wave with direction $\theta \in \mathbb{S}^{1}:=\{x\in \R^2:|x|=1\}$ and 
$u_{sc}(x,k,\theta) $ is the scattered wave. The scattered wave must satisfy the Sommerfeld radiation condition at infinity.
These solutions are the unique solutions of the Lippmann--Schwinger equation
\begin{equation}\label{lippman-schwinger}
u(x,k,\theta) = \rme^{\rmi k(x,\theta)} - \int_{\mathbb{R}^{2}}{G_{k}^{+}(|x-y|)h(y,|u|)u(y)\rmd y},
\end{equation}
where $G_{k}^{+}$ is the outgoing fundamental solution of the two dimension Helmholtz equation and is defined as
\begin{equation*}
G_{k}^{+}(|x|) = \frac{\rmi}{4}H_{0}^{(1)}(|k||x|),
\end{equation*}
where $H_{0}^{(1)}$ is the Hankel function of the first kind and zero order. Recall that the function  $G_{k}^{+}$ is the kernel
of the integral operator $(-\Delta - k^{2} - \rmi 0)^{-1}$. 

In \cite{SeHF} it was established the asymptotic form of the scattering solutions for fixed $k\ge k_0>0$ and for all $\theta',\theta \in \s^1$, which gives access to scattering data
$$
u = u_0 - \frac{1+\rmi}{4\sqrt{\pi}}k^{-\frac{1}{2}}|x|^{-\frac{1}{2}}\rme^{\rmi k|x|}A(k,\theta',\theta) + o\Big(\frac{1}{|x|^\frac{1}{2}}\Big), \quad |x|\to \infty,
$$
where $\theta,\theta'=\frac{x}{|x|} \in \s^1$ and the function $A(k,\theta',\theta)$ is called the scattering amplitude and is defined as
\begin{equation*}
A(k,\theta',\theta) := \int_{\mathbb{R}^{2}}{\rme^{-\rmi k(\theta',y)}h(y,|u|)u(y)\rmd y},
\end{equation*}
where $\theta'$ is called the direction of observation.

In two dimensions in the case of linear Schr\"odinger operator the first uniqueness and reconstruction algorithm was proved by Nachman
\cite{nachman} via $\bar{\partial}$-methods for potentials of conductivity type. Note that   Grinevich and Novikov \cite{grinevich-novikov} showed that in two dimensions there are nonzero real potentials of the Schwartz class with zero amplitude at fixed energy. Sun and Uhlmann \cite{sun-uhlmann} proved uniqueness for potentials satisfying nearness conditions to each other. A global uniqueness result for the linear Schr\"odinger equation with fixed energy was proved only in 2008 by Bukhgeim \cite{bukhgeim} for compactly supported potentials from $L^p,p>2$. Bukhgeim's result has since been improved and extended to treat related inverse problems (see for example \cite{blasten,imanuvilov-yamamoto}). For the nonlinear two-dimensional Schr\"odinger operator with very general nonlinear potential function $h$ three very well-known inverse scattering problems were successfully solved by Fotopoulos, Harju and Serov (see \cite{SeHF} and \cite{FHSe}). They considered general scattering, backscattering and fixed angle scattering problems. It turned out that (as we can see in this article) inverse fixed energy scattering problem is much more difficult than the others. The result of the present work slightly generalizes the linear case to special type of the nonlinearity. 

It can be mentioned here that in \cite{serov2012} a nonlinear model (for fixed energy scattering problem) was studied in a three-dimensional case.

 This work is organized as follows. In section \ref{direct} we examine the direct scattering problem and prove the existence and uniqueness of special type of solution, called complex geometrical optics (CGO solution ). Section \ref{inverse} discusses the inverse scattering problem where we apply the Born approximation method to recover the main singularities of the unknown function.
 
We use the following definition for the two-dimensional Fourier transform and its inverse
\begin{align*}
(\F f)(\xi)=&\int_{\R^2}{\rme^{\rmi(x,\xi)}f(x)\rmd x}\\
(\F^{-1}f)(x)=&\frac{1}{4\pi^2}\int_{\R^2}{\rme^{-\rmi(x,\xi)}f(\xi)\rmd\xi}.
\end{align*}
The inner product in the Euclidean space $\R^2$ is $(x,\xi)=x_1\xi_1+x_2\xi_2$ where $x=(x_1,x_2)$ and $\xi=(\xi_1,\xi_2)\in \R^2$.
\section{The direct problem}\label{direct} 

In fixed energy scattering problems crucial role plays the Faddeev Green's function
$$
g_z(x):=\frac{1}{4\pi^2}\int_{\R^2}{\frac{\rme^{-\rmi(x,\xi)}}{\xi^2+2(z,\xi)}\rmd \xi},
$$
where $z\in \C^2$ with $(z,z)=0$. This function $g_z(x)$ is the fundamental solution of the following operator with constant coefficients $$
-\Delta-2\rmi(z,\nabla).
$$
The following basic result which is proved in \cite{serov2012} will be used in what follows.
We assume that $\Omega\subset \R^2$ is a bounded domain and we extend $f$ by zero outside of $\Omega$. 
\begin{prop}\label{conv_est}
For any $\gamma<1$ there exists constant $c_{\gamma}>0$ such that for any $f\in L^2(\Omega)$ and for $|z|>1$,
$$
\norm{g_z\ast f}_{L^\infty(\R^2)} \leq \frac{c_{\gamma}}{|z|^\gamma}\norm{f}_{L^2(\Omega)},
$$
where the symbol $\ast$  denotes the convolution.
\end{prop}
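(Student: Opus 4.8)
The plan is to estimate $\|g_z \ast f\|_{L^\infty(\R^2)}$ by working on the Fourier side and splitting the multiplier $m_z(\xi) = (\xi^2 + 2(z,\xi))^{-1}$ according to the size of the denominator. Writing $z = \tau(\omega_1 + \rmi \omega_2)$ with $\omega_1,\omega_2 \in \s^1$ orthonormal and $\tau = |z|/\sqrt{2}$ (using $(z,z)=0$), the denominator becomes $\xi^2 + 2\tau(\omega_1,\xi) + 2\rmi\tau(\omega_2,\xi)$, whose real and imaginary parts vanish simultaneously only on the circle $\{(\omega_2,\xi)=0,\ (\omega_1 + \xi/(2\tau))^2 = 1\}$ of radius $\sim \tau$. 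So $g_z \ast f = \F^{-1}(m_z \widehat f)$ and the singularity of $m_z$ is concentrated near that circle.

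First I would record that $g_z \ast f = \F^{-1}(m_z \F f)$ with $\F f \in L^2$ supported spectrally (no — rather $f$ is compactly supported in $\Omega$, so $\F f$ is entire of exponential type and in particular bounded on compact sets and in $L^2$). To bound the sup norm, I would write $(g_z\ast f)(x) = \frac{1}{4\pi^2}\int e^{-\rmi(x,\xi)} m_z(\xi)\widehat f(\xi)\,\rmd\xi$ and split the integral into a region $R_1$ near the singular circle (a tubular neighborhood of width $\delta$, to be optimized) and the complement $R_2$. On $R_2$, $|m_z(\xi)| \le C/(\delta\,\tau)$ roughly for $|\xi|$ up to order $\tau$ and $|m_z(\xi)| \le C/|\xi|^2$ for $|\xi| \gg \tau$, so by Cauchy–Schwarz the $R_2$-contribution is bounded by $\|m_z\|_{L^2(R_2)}\|\widehat f\|_{L^2} \lesssim \|m_z\|_{L^2(R_2)}\|f\|_{L^2(\Omega)}$; a direct computation of $\|m_z\|_{L^2(R_2)}$ in coordinates adapted to the circle (one coordinate along, one transverse) gives a bound like $\delta^{-1/2}\tau^{-1/2}\cdot(\log\text{-type factor})$, which one arranges to be $\lesssim \tau^{-\gamma}$ by choosing $\delta$ appropriately. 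On $R_1$ I would instead use that $\widehat f$ is not merely $L^2$ but continuous (again because $f$ is compactly supported), so $\|\widehat f\|_{L^\infty}$ is controlled, and then estimate $\int_{R_1}|m_z(\xi)|\,\rmd\xi$ directly; near the circle the denominator behaves like $\rho_1 + \rmi\rho_2$ in suitable local coordinates with $|\rho_j| \lesssim 1$, and $\int \frac{d\rho_1 d\rho_2}{|\rho_1 + \rmi\rho_2|}$ over a disc is finite, giving an $R_1$-bound of order $\delta$ times $\tau$ (the circumference) times $\|\widehat f\|_\infty$. One must be careful that the bound $\|\widehat f\|_\infty \le C\|f\|_{L^1(\Omega)} \le C|\Omega|^{1/2}\|f\|_{L^2(\Omega)}$ keeps the estimate in terms of the $L^2$ norm as claimed.

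I expect the main obstacle to be the endpoint behavior: the multiplier $m_z$ is only borderline integrable near its singular set, so the naive splitting loses a logarithm, which is exactly why the statement is phrased with $\gamma < 1$ rather than $\gamma = 1$. The technical heart is therefore choosing the tube width $\delta = \delta(\tau)$ (something like a negative power of $\tau$, or $\tau$-power times a logarithmic correction) so that the $R_1$ contribution ($\sim \delta\tau\|f\|_{L^2}$, up to the $|\Omega|^{1/2}$ factor) and the $R_2$ contribution ($\sim (\delta\tau)^{-1/2}\|f\|_{L^2}$ up to logs) balance against $\tau^{-\gamma}$; this forces $\delta \sim \tau^{-1-\gamma}$ and one checks that for every $\gamma<1$ the resulting constant $c_\gamma$ is finite (blowing up as $\gamma \to 1$). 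A secondary point to handle cleanly is the large-$|\xi|$ tail, where $|m_z(\xi)| \sim |\xi|^{-2}$ is square-integrable at infinity in $\R^2$ only marginally, so one again leans on $\widehat f \in L^2$ together with the $|\xi|^{-2}$ decay via Cauchy–Schwarz, or alternatively on the smoothing already present in $g_z$ (it behaves like the Laplacian's fundamental solution far from the characteristic set). Since Proposition~\ref{conv_est} is quoted from \cite{serov2012}, I would in the write-up either reproduce this tube-splitting argument in the two-dimensional setting or simply cite it; the plan above is the route I would take to reconstruct it.
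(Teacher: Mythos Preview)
The paper does not supply a proof of Proposition~\ref{conv_est}; it simply attributes the result to \cite{serov2012} and proceeds to use it. Your closing remark already recognizes this, so on the level of comparison there is nothing to match against beyond the citation.

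Your reconstruction strategy---analyze the multiplier $m_z(\xi)=(\xi^2+2(z,\xi))^{-1}$ on the Fourier side, split into a region near the singular set and its complement, use Cauchy--Schwarz together with $\|m_z\|_{L^2}$ on the far piece and the bound $\|\widehat f\|_{L^\infty}\le |\Omega|^{1/2}\|f\|_{L^2(\Omega)}$ (from compact support) on the near piece---is indeed the standard route and, carried out carefully, yields the stated estimate for every $\gamma<1$. Two points in the sketch should be tightened if you actually write this up. First, in $\R^2$ the zero set of $\xi^2+2(z,\xi)$ is not a circle but two isolated points (the intersection of the circle $|\xi+\tau\omega_1|=\tau$, where the real part vanishes, with the line $(\omega_2,\xi)=0$, where the imaginary part vanishes); the relevant tubular neighborhood is around the circle where $\mathrm{Re}\,D$ is small, and on that circle $|D|=2\tau|\,(\omega_2,\xi)\,|$ is large except near the two singular points. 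Second, the balancing you wrote ($R_1\sim\delta\tau$, $R_2\sim(\delta\tau)^{-1/2}$) does not match what the computation actually gives: a careful evaluation yields $\int_{R_1}|m_z|\sim(\delta/\tau)\log(\tau/\delta)$ (the circumference factor $\tau$ cancels against the $1/\tau$ from $|D|\sim 2\tau\cdot(\text{local distance})$) and $\|m_z\|_{L^2(R_2)}\sim\tau^{-1}\sqrt{\log(\tau/\delta)}$, so the choice of $\delta$ and the origin of the $\gamma<1$ restriction differ from what you wrote, though the conclusion survives. These are correctable details rather than a gap in the approach.
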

The estimate above allow us to prove the existence of complex geometrical optics (CGO) solutions or sometimes also called exponentially growing solutions, for the homogeneous nonlinear Schr\"odinger equation
\begin{equation}\label{hom_schrodinger}
-\Delta u(x) + h(x,|u(x)|)u(x) = 0.
\end{equation}
Such type of solutions were first introduced by Faddeev in \cite{faddeev}. 
In order to define the CGO solutions we introduce a complex parameter $z$ and look for solutions of the form
\begin{equation}\label{cgo}
u(x,z)=\rme^{\rmi(x,z)}\big(1+R(x,z)\big),
\end{equation}
with $(z,z)=0$ 
that satisfy the equation \eqref{hom_schrodinger}. In \cite{nachman} Nachman proved that in the linear case there exists a unique CGO solution for any $z\in \C^2\setminus \{0\}$ and satisfies some asymptotic condition.

Let us discuss the construction of these solutions. Applying \eqref{cgo} to \eqref{hom_schrodinger} we obtain
\begin{equation}\label{Requation}
-\Delta R - 2\rmi(z,\nabla)R + h(x,|\rme^{\rmi(x,z)}(1+R)|)(1+R) =0
\end{equation}

and using the Faddeev Green's function we can write 
\begin{align}\label{R-lippman-schwinger}
R(x,z)=
-\int_{\R^2}{g_z(x-y)h(y,|\rme^{\rmi(y,z)}(1+R(y,z)|))(1+R(y,z))\rmd y}.
\end{align}

We are now in position to prove the existence and the uniqueness of CGO solution for the equation \eqref{hom_schrodinger}.
We assume that the potential function $h$ is compactly supported in $\Omega\subset\R^2$ and has the following properties:
\begin{enumerate}
\item $|h(x,s)| \leq \alpha(x),$   $\alpha \in L^2(\Omega),$ $s\in \R_+$.

\item $\big|h(x,|\rme^{\rmi(x,z)}(1+R_1)|) - h(x,|\rme^{\rmi(x,z)}(1+R_2)|)\big| \leq \beta(x)\big|R_{1} - R_{2}\big|,\ \beta\in L^2(\Omega)$, for any $R_1,R_2 \in L^\infty(\R^2)$ and for any $z\in \C^2$.
\end{enumerate}
We assume here also that $\|\alpha\|_{L^2(\Omega)}>0$ and $\|\beta\|_{L^2(\Omega)}>0$.
\begin{thm}\label{thm1}
Under the above conditions for the potential function $h$ there exists a constant $C_0>0$ such that for all $|z|\ge C_0$ the equation \eqref{R-lippman-schwinger} has a unique solution in the space $L^{\infty}(\R^2)$ and this solution can be obtained as $\lim_{j\to\infty} R_j$ in $L^{\infty}(\R^2)$ with $R_0=0$ and with 
$$
R_j(x,z):=-\int_{\R^2}{g_z(x-y)h(y,|\rme^{\rmi(y,z)}(1+R_{j-1}(y,z)|))(1+R_{j-1}(y,z))\rmd y},\quad j=1,2,...
$$ 
Moreover the following estimates hold
\begin{equation}\label{R_estimates}
\norm{R}_{L^{\infty}(\R^2)} \le \frac{c}{|z|^\gamma},\quad 
\norm{R-R_j}_{L^{\infty}(\R^2)} \le \frac{c}{|z|^{\gamma(j+1)}},\quad j=0,1,2,...
\end{equation}
with some constant $c>0$ and with $\gamma$ as in Proposition \ref{conv_est}.
\end{thm}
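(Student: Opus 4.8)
The plan is to set up a contraction-mapping (Banach fixed point) argument in the space $L^\infty(\R^2)$, with the parameter $|z|$ large playing the role of the small parameter. Define the nonlinear map
\[
(\Phi R)(x,z):=-\int_{\R^2}{g_z(x-y)h(y,|\rme^{\rmi(y,z)}(1+R(y,z))|)(1+R(y,z))\rmd y},
\]
so that a solution of \eqref{R-lippman-schwinger} is exactly a fixed point of $\Phi$, and the iterates $R_j$ in the statement are $R_j=\Phi R_{j-1}$ with $R_0=0$. First I would verify that $\Phi$ maps a suitable closed ball $B:=\{R\in L^\infty(\R^2):\norm{R}_{L^\infty(\R^2)}\le 1\}$ into itself for $|z|$ large. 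For $R\in B$, property (1) gives the pointwise bound $|h(y,|\rme^{\rmi(y,z)}(1+R)|)(1+R)|\le 2\alpha(y)$, which lies in $L^2(\Omega)$; hence Proposition \ref{conv_est} applies and yields $\norm{\Phi R}_{L^\infty(\R^2)}\le 2c_\gamma|z|^{-\gamma}\norm{\alpha}_{L^2(\Omega)}$. Choosing $C_0$ so that $2c_\gamma C_0^{-\gamma}\norm{\alpha}_{L^2(\Omega)}\le 1$ (and $C_0>1$) guarantees $\Phi(B)\subset B$ for all $|z|\ge C_0$.

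Next I would establish the contraction estimate. For $R_1,R_2\in B$, write the difference $\Phi R_1-\Phi R_2$ as the $g_z$-convolution of
\[
h(y,|\rme^{\rmi(y,z)}(1+R_1)|)(1+R_1)-h(y,|\rme^{\rmi(y,z)}(1+R_2)|)(1+R_2),
\]
and split this into two terms: one in which $h$ is evaluated at the same argument and the factor $(1+R_1)-(1+R_2)=R_1-R_2$ is extracted (controlled by property (1), giving a factor $\alpha(y)|R_1-R_2|$), and one in which the difference of the $h$-values is estimated by property (2), giving a factor $\beta(y)|R_1-R_2|\,|1+R_2|\le 2\beta(y)|R_1-R_2|$. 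In both terms the $L^2(\Omega)$-norm of the integrand is bounded by a constant times $\norm{R_1-R_2}_{L^\infty(\R^2)}$, so Proposition \ref{conv_est} gives
\[
\norm{\Phi R_1-\Phi R_2}_{L^\infty(\R^2)}\le \frac{c_\gamma(\norm{\alpha}_{L^2(\Omega)}+2\norm{\beta}_{L^2(\Omega)})}{|z|^\gamma}\,\norm{R_1-R_2}_{L^\infty(\R^2)}.
\]
Enlarging $C_0$ if necessary so that the prefactor is $\le 1/2$ for $|z|\ge C_0$, the map $\Phi$ is a contraction on the complete metric space $B$, so it has a unique fixed point $R$ there, and $R_j\to R$ in $L^\infty(\R^2)$.

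For the quantitative estimates \eqref{R_estimates}: since $R_0=0$, the self-mapping bound gives $\norm{R}_{L^\infty(\R^2)}=\norm{\Phi R}_{L^\infty(\R^2)}\le c|z|^{-\gamma}$ with $c=2c_\gamma\norm{\alpha}_{L^2(\Omega)}$, which also shows $R\in B$ so uniqueness in $B$ is uniqueness in $L^\infty(\R^2)$ (any $L^\infty$ solution automatically satisfies the same a priori bound once $|z|\ge C_0$, landing it in $B$). For the rate of convergence, I would argue by induction: $\norm{R-R_0}_{L^\infty}=\norm{R}_{L^\infty}\le c|z|^{-\gamma}$ is the base case, and using $R=\Phi R$, $R_{j}=\Phi R_{j-1}$ together with the contraction estimate (with contraction constant $\le |z|^{-\gamma}$ after absorbing the finite constant into the choice of $C_0$) gives $\norm{R-R_j}_{L^\infty}\le |z|^{-\gamma}\norm{R-R_{j-1}}_{L^\infty}\le c|z|^{-\gamma(j+1)}$, completing the induction. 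The only real subtlety — and the step I expect to require the most care — is the bookkeeping of constants: one must fix $C_0$ once and for all so that simultaneously $\Phi(B)\subset B$, the contraction factor is at most $|z|^{-\gamma}$ (not merely some constant times it, so that the geometric decay in \eqref{R_estimates} comes out with exponent $\gamma(j+1)$ rather than a mismatched rate), and $C_0>1$ as required by Proposition \ref{conv_est}; everything else is a routine application of the convolution estimate and the two hypotheses on $h$.
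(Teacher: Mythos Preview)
Your proposal is correct and follows essentially the same route as the paper: set up the nonlinear map on a closed ball in $L^\infty(\R^2)$, use Proposition~\ref{conv_est} together with hypotheses (1) and (2) on $h$ to obtain the self-map and contraction estimates, and invoke Banach's fixed point theorem. The paper works on a ball of a carefully tuned radius $\rho$ (chosen in terms of $\norm{\alpha}_{L^2}$ and $\norm{\beta}_{L^2}$) and arrives at the explicit threshold $C_0=\bigl(c_\gamma(\norm{\beta}_{L^2}+2\norm{\alpha}_{L^2})\bigr)^{1/\gamma}$, whereas you take the unit ball and simply absorb the constants into $C_0$; this is a cosmetic difference, not a different argument. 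Your treatment of the rate estimates \eqref{R_estimates} by induction is in fact more explicit than the paper's, which simply asserts they are ``easy to check''; your closing remark about the bookkeeping of constants (ensuring the contraction factor carries the full power $|z|^{-\gamma}$ rather than a constant multiple of it, so that the exponent $\gamma(j+1)$ comes out with a $j$-independent $c$) is exactly the point that needs attention there.
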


\begin{proof}
We consider the ball $B_{\rho}(0):=\{R \in L^{\infty}(\R^2): \|R\|_{L^{\infty}(\R^2)}\leq \rho \}$, for some $\rho>0$ (which will be defined later) and the operator $T: B_{\rho}(0) \rightarrow B_{\rho}(0)$ by 
\begin{equation*}
T(R):= -\int_{\R^2}{g_z(x-y)h(y,|\rme^{\rmi(y,z)}(1+R(y,z)|))(1+R(y,z))\rmd y}.
\end{equation*}
Our aim is to apply Banach's fixed point theorem for the operator $T$. Therefore we have to show that $T:B_{\rho}(0)\to B_{\rho}(0)$ and also that $T$ is a contraction.
Firstly we will show that $T$ maps $B_{\rho}(0)$ into itself. Let $ R\in B_{\rho}(0)$, applying Proposition \ref{conv_est} we obtain
\begin{align*}
\|T(R)\|_{L^{\infty}(\R^2)}   
&  \leq 
\frac{c_{\gamma}}{|z|^\gamma}\norm{\alpha}_{L^2(\Omega)}(1 +\|R\|_{L^{\infty}(\R^2)})\\
 & \leq 
(1+\rho)\frac{c_{\gamma}}{|z|^\gamma}\norm{\alpha}_{L^2(\Omega)}.
\end{align*}
Since we want $\|T(R)\|_{L^{\infty}(\R^2)} \leq \rho$, we require that 
\begin{equation}
\frac{(1+\rho)c_{\gamma}\norm{\alpha}_{L^2(\Omega)}}{|z|^\gamma} \leq \rho\quad \Longleftrightarrow \quad \rho\ge \frac{c_{\gamma}\norm{\alpha}_{L^2(\Omega)}}{|z|^{\gamma}-c_{\gamma}\norm{\alpha}_{L^2(\Omega)}}
\end{equation}
for $|z|^{\gamma}>c_{\gamma}\norm{\alpha}_{L^2(\Omega)}$.

It remains to show that $T$ is a contraction, i.e. 
$$
\|T(R_{1}) - T(R_{2})\|_{L^\infty(\R^2)}\leq q\|R_{1}-R_{2}\|_{L^{\infty}(\R^2)}, \quad q<1.
$$
Let $R_1,R_2 \in B_\rho(0)$ then we obtain
\begin{align*}
\|T(R_{1}) - T(R_{2})\|_{L^\infty(\R^2)} 
 \leq & \frac{c_{\gamma}}{|z|^\gamma}\norm{\beta}_{L^2(\Omega)}\|R_{1}-R_{2}\|_{L^{\infty}(\R^2)}\\ \quad &+ 
  \rho\frac{c_{\gamma}}{|z|^\gamma}\norm{\beta}_{L^2(\Omega)}\|R_{1}-R_{2}\|_{L^{\infty}(\R^2)}+ \frac{c_{\gamma}}{|z|^\gamma}\norm{\alpha}_{L^2(\Omega)}\|R_{1}-R_{2}\|_{L^{\infty}(\R^2)}\\ 
 =& \frac{c_{\gamma}}{|z|^\gamma}\big((\rho+1)\norm{\beta}_{L^2(\Omega)}+\norm{\alpha}_{L^2(\Omega)}\big)\|R_{1}-R_{2}\|_{L^{\infty}(\R^2)}.
\end{align*}
In order for $T$ to be a contraction we require that  
$$
\frac{c_{\gamma}}{|z|^\gamma}\big((\rho+1)\norm{\beta}_{L^2(\Omega)}+\norm{\alpha}_{L^2(\Omega)}\big) < 1
$$ 
or
\begin{equation}
\rho< \frac{|z|^{\gamma}-c_{\gamma}(\norm{\beta}_{L^2(\Omega)}+\norm{\alpha}_{L^2(\Omega)})}{c_{\gamma} \norm{\beta}_{L^2(\Omega)}} 
\end{equation}
for $|z|^{\gamma}>c_{\gamma}(\norm{\beta}_{L^2(\Omega)}+\norm{\alpha}_{L^2(\Omega)})$. Let us choose now $|z|$ and $\rho$ as follows
\begin{equation}\label{constant for mod_z}
|z|\ge \left(c_{\gamma}(\norm{\beta}_{L^2(\Omega)}+2\norm{\alpha}_{L^2(\Omega)})\right)^{\frac{1}{\gamma}}:=C_0,
\end{equation}
\begin{equation}\label{rho_ineq}
\frac{\norm{\alpha}_{L^2(\Omega)}}{\norm{\beta}_{L^2(\Omega)}+\norm{\alpha}_{L^2(\Omega)}}<\rho<\frac{\norm{\alpha}_{L^2(\Omega)}}{\norm{\beta}_{L^2(\Omega)}}.
\end{equation}
It is not so hard to check that with this choice of $\rho$ and $|z|$ the conditions of (8) and (9) are satisfied.
Hence, due to \textit{Banach's fixed point theorem} (see, for example, \cite{Z}), in any ball $B_{\rho}(0)\subset L^{\infty}(\R^2)$ with $\rho$ from \eqref{rho_ineq} and for all $|z|$ which satisfies \eqref{constant for mod_z}, equation \eqref{hom_schrodinger} has a unique CGO solution. Lastly, it's easy to check that the estimates \eqref{R_estimates} hold.
\end{proof}

\section{The inverse problem }\label{inverse}
Next we define what is called as the scattering transform of the potential function $h$, first considered by Beals and Coifman \cite{beals-coifman} and Ablowitz and Nachman \cite{nachman-ablowitz} for the linear Schr\"odinger operator
$$
T_h(\xi) := \int_{\R^2}{\rme^{\rmi(x,\xi)}h(x,e_0|(1+R(x,z)|)(1+R(x,z))\rmd x},
\ |\xi|\ge \sqrt 2C_0, 
$$
$$
\ T_h(\xi) := 0,  \ |\xi|<\sqrt 2C_0,
$$ 
where constant $C_0$ from \eqref{constant for mod_z}, $e_0:=|\rme^{\rmi(x,z)}|=\rme^{\frac{1}{2}(x_1\xi_2-x_2\xi_1)}$, $z=\frac{1}{2}(\xi-\rmi J\xi)$ and 
$
J=\left(\begin{array}{rr}
0 & 1 \\
-1 & 0
\end{array}\right)
$. 
We denote this new constant $\sqrt 2 C_0$ (for simplicity) again by $C_0$.
 
It can be obtained that 
$$
T_h(\xi) = \lim_{j\to\infty}T_{h,j}(\xi)=\lim_{j\to\infty}\int_{\R^2}{\rme^{\rmi(x,\xi)}h(x,e_0|(1+R_j(x,z))|)(1+R_j(x,z))\rmd x},
$$
where the limit is uniform in $\xi \in \R^2$.

By following the same technique as in Gilbarg and Trudinger \cite{GT} (see theorems 11.4 and 11.8, pp. 281-288) we may prove that the Dirichlet boundary value problem for equation (3) with the boundary condition $f$ from Sobolev space $W^t_2(\partial\Omega), t>\frac{1}{2},$ has a unique solution $u$ from Sobolev space $W^s_2(\Omega), s>1.$ Thus, we may define the Dirichlet-to-Neumann map by $\Lambda_hf=\partial_{\nu}u$, where $\nu$ denotes the outward normal vector at the boundary and $\partial_{\nu}$ is the normal derivative. The main fact here is (we use the classical scheme as it is in linear case): the scattering amplitude $A(k_0,\vartheta',\vartheta)$ with fixed $k_0^2>0$ uniquely determines the Dirichlet-to-Neumann map $\Lambda _{h-k_0^2}$ (see\cite{novikov}, \cite{Sy1} and \cite{Sy2}  for details) and the Dirichlet-to-Neumann map in turn uniquely determines the scattering transform $T_h$ as a function of $\xi$. This fact is proved in the following lemma.

\begin{lemma}\label{lemma1}
The scattering transform $T_h(\xi)$ depends only on boundary values of $R$ and $\frac{\partial R}{\partial \nu}$ i.e. the Dirichlet-to-Neumann map determines the scattering transform uniquely.
\end{lemma}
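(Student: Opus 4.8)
The plan is to convert the volume integral defining $T_h(\xi)$ into an integral over $\partial\Omega$. Since the CGO solution $u(x,z)=\rme^{\rmi(x,z)}(1+R(x,z))$ solves \eqref{hom_schrodinger}, one has $h(x,|u|)u=\Delta u$ in $\Omega$, hence $h(x,e_0|1+R|)(1+R)=\rme^{-\rmi(x,z)}\Delta u$; and since $h$ is supported in $\Omega$,
\[
T_h(\xi)=\int_{\Omega}\rme^{\rmi(x,\xi)}\rme^{-\rmi(x,z)}\Delta u\,\rmd x=\int_{\Omega}\rme^{\rmi(x,\xi-z)}\Delta u\,\rmd x .
\]
The algebraic point is that, for $z=\frac12(\xi-\rmi J\xi)$, the vector $\xi-z=\frac12(\xi+\rmi J\xi)$ is again isotropic: $(\xi-z,\xi-z)=\frac14\big((\xi,\xi)+2\rmi(\xi,J\xi)-(J\xi,J\xi)\big)=0$, because $(\xi,J\xi)=0$ and $(J\xi,J\xi)=|\xi|^2$. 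Consequently the entire function $v(x):=\rme^{\rmi(x,\xi-z)}$ satisfies $\Delta v=-(\xi-z,\xi-z)v=0$.

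Next I would apply Green's second identity on $\Omega$ to the pair $u$ and $v$ (justified since $u\in W^s_2(\Omega)$, $s>1$, and $\Delta u=h(x,|u|)u\in L^2(\Omega)$, while $v$ is smooth), using $\Delta v=0$:
\[
T_h(\xi)=\int_{\Omega}v\,\Delta u\,\rmd x=\int_{\partial\Omega}\Big(v\,\frac{\partial u}{\partial\nu}-u\,\frac{\partial v}{\partial\nu}\Big)\rmd\sigma .
\]
Substituting $u=\rme^{\rmi(x,z)}(1+R)$, $\partial_\nu u=\rme^{\rmi(x,z)}\big(\rmi(z,\nu)(1+R)+\partial_\nu R\big)$ and $\partial_\nu v=\rmi(\xi-z,\nu)v$, the exponentials combine to $\rme^{\rmi(x,\xi)}$, and since $2z-\xi=-\rmi J\xi$ a short computation gives the explicit formula
\[
T_h(\xi)=\int_{\partial\Omega}\rme^{\rmi(x,\xi)}\Big(\frac{\partial R}{\partial\nu}(x,z)+(J\xi,\nu)\big(1+R(x,z)\big)\Big)\rmd\sigma(x),\qquad |\xi|\ge C_0 .
\]
This depends on $R(\cdot,z)$ and $\partial_\nu R(\cdot,z)$ only through their traces on $\partial\Omega$, which is the first claim of the lemma.

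It remains to see that these two boundary traces are recovered from the Dirichlet-to-Neumann map $\Lambda_h$. Since $u=\rme^{\rmi(x,z)}(1+R)$, the traces $R|_{\partial\Omega}$ and $\partial_\nu R|_{\partial\Omega}$ are equivalent data to the Cauchy data $u|_{\partial\Omega}$, $\partial_\nu u|_{\partial\Omega}=\Lambda_h(u|_{\partial\Omega})$, so it suffices to determine $u|_{\partial\Omega}$ from $\Lambda_h$. Following the classical linear scheme, I would restrict the identity $u(x)=\rme^{\rmi(x,z)}-\rme^{\rmi(x,z)}\int_\Omega g_z(x-y)\rme^{-\rmi(y,z)}h(y,|u|)u(y)\,\rmd y$ to $x\in\partial\Omega$, replace $h(y,|u|)u$ by $\Delta u$, and integrate by parts so that the right-hand side is expressed through the Cauchy data of $u$; coupled with $\partial_\nu u=\Lambda_h(u|_{\partial\Omega})$ this produces a closed boundary integral equation of Fredholm type for $u|_{\partial\Omega}$ with operator built from $g_z$. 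By Proposition \ref{conv_est} and the estimates of Theorem \ref{thm1} this operator has norm $\ohbig(|z|^{-\gamma})$, so the boundary equation is uniquely solvable for $|z|\ge C_0$ (cf.\ \eqref{constant for mod_z}); hence $\Lambda_h$ determines $u|_{\partial\Omega}$, and then $T_h(\xi)$. I expect the main obstacle to be exactly this last step: one must check that, in the nonlinear setting, the boundary equation genuinely closes — it does because $h(\cdot,|u|)u$ is rewritten as $\Delta u$, a quantity fixed by the Cauchy data of $u$ together with \eqref{hom_schrodinger} — and that the boundary single-layer operator attached to $g_z$ inherits the decay in $|z|$ furnished by Proposition \ref{conv_est}, so that a Banach fixed point argument as in Theorem \ref{thm1} applies on $\partial\Omega$ as well.
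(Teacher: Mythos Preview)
Your argument is correct and is the same integration-by-parts idea as the paper's, just packaged differently. The paper works directly with the equation \eqref{Requation} for $R$: it multiplies $\Delta R+2\rmi(z,\nabla R)=h(x,r)(1+R)$ by $\rme^{\rmi(x,\xi)}$, integrates over $\Omega$, and uses Green's identity together with $2(z,\xi)=|\xi|^2$ to obtain
\[
T_h(\xi)=\int_{\partial\Omega}\Big(\rme^{\rmi(x,\xi)}\frac{\partial R}{\partial\nu}-R\,\frac{\partial\rme^{\rmi(x,\xi)}}{\partial\nu}+2\rmi(z,\nu)\rme^{\rmi(x,\xi)}R\Big)\rmd\sigma(x).
\]
You instead pass to $u=\rme^{\rmi(x,z)}(1+R)$ and pair it with the harmonic test function $v=\rme^{\rmi(x,\xi-z)}$; the two boundary formulas differ only by $\int_{\partial\Omega}\rme^{\rmi(x,\xi)}(J\xi,\nu)\,\rmd\sigma$, which vanishes by the divergence theorem since $(\xi,J\xi)=0$. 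Your observation that $\xi-z$ is isotropic is exactly the dual of the paper's use of $2(z,\xi)=|\xi|^2$. Your second paragraph, on recovering $u|_{\partial\Omega}$ from $\Lambda_h$ via a Nachman-type boundary integral equation, actually goes beyond what the paper proves here: the paper's proof stops at the boundary formula and leaves the passage from $\Lambda_h$ to the CGO boundary traces to the ``classical scheme'' cited just before the lemma.
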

\begin{proof}
Recalling \eqref{Requation} and denoting by $r=|\rme^{\rmi(\cdot,z)}(1+R)|$ we have
\begin{align*}
\Delta R+2\rmi(z,\nabla)R & =h(x,r)(R+1)\Rightarrow\\
\int_{\Omega}{\rme^{\rmi(x,\xi)}\Delta R + \rme^{\rmi(x,\xi)}2\rmi(z,\nabla)R\rmd x} & = \int_{\Omega}{\rme^{\rmi(x,\xi)}h(x,r)(R+1)\rmd x} = : T_h(\xi).
\end{align*}
Making use of the formulas 
$$
\nabla(2\rmi z\rme^{\rmi(x,\xi)} R)=R\Delta\rme^{\rmi(x,\xi)} + 2\rmi z\rme^{\rmi(x,\xi)} \nabla R
,\quad 2(z,\xi)= (\xi,\xi)
$$
and the Green's identity we derive
$$
T_h(\xi)=\int_{\partial\Omega}{\Big(\rme^{\rmi(x,\xi)}\frac{\partial R}{\partial \nu} - R \frac{\partial\rme^{\rmi(x,\xi)}}{\partial \nu} + 2\rmi (z,\nu)\rme^{\rmi(x,\xi)} R\Big)\rmd\sigma(x)}.
$$
\end{proof}

We are now in position to define the Born approximation of the potential function $h$.
\begin{defn}\label{def1}
The inverse fixed energy scattering Born approximation $q_B^f(x)$ of the potential function $h$ is defined by
$$
q_B^f(x):= \F^{-1}(T_h(\xi))(x),
$$
where $\F^{-1}$ is the inverse Fourier transform and where the equality is understood in the sense of tempered distributions.
\end{defn}

We write the inverse  Born approximation as
$$
q_B^f(x)-h_0(x)=q_B^f(x)-q_{B,1}^f(x)+q_{B,1}^f(x)-h_0(x),
$$
where $h_0(x)$ is defined as
$$
h_0(x):= \F^{-1}\Big(T_{h,0}(\xi)\Big)(x)=\F^{-1}\Big(\int_{\R^2}{\rme^{\rmi(x,\xi)}h(x,e_0)\rmd x}\Big)(x).
$$

The main goal of this work to prove the following result.
\begin{thm}\label{thm2}(Main theorem) Under the foregoing and subsequent conditions for the potential function $h$ 
$$
q_B^f(x)- h_0(x) \in H^t(\R^2),
$$
for any $t<1$ and modulo $C^{\infty}(\R^2)$-functions. 
\end{thm}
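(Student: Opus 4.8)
The plan is to decompose the error $q_B^f(x)-h_0(x)$ as indicated right before the statement, namely
$$
q_B^f(x)-h_0(x)=\big(q_B^f(x)-q_{B,1}^f(x)\big)+\big(q_{B,1}^f(x)-h_0(x)\big),
$$
and to show that the first difference is a $C^\infty(\R^2)$-function while the second lies in $H^t(\R^2)$ for every $t<1$. The term $q_{B,1}^f-q_B^f$ is the inverse Fourier transform of $T_{h,1}-T_h$, and by the second estimate in \eqref{R_estimates} the integrand involves a factor that is $\ohbig(|z|^{-2\gamma})$ on $|\xi|\ge C_0$; since $|z|\sim|\xi|$, the Fourier transform $T_h(\xi)-T_{h,1}(\xi)$ decays like $|\xi|^{-2\gamma}$ (and in fact every further iterate gains another power), so after subtracting finitely many iterates one gets arbitrarily fast polynomial decay, hence $C^\infty$ smoothness of the inverse Fourier transform. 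More precisely, I would show $T_h-T_{h,1}\in L^1(\R^2)$ with weight $|\xi|^N$ for every $N$, using that $R$ and $R-R_j$ are controlled uniformly on $\supp h=\Omega$ by Theorem~\ref{thm1}.

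Next, for $q_{B,1}^f-h_0$, I would write out $T_{h,1}(\xi)-T_{h,0}(\xi)$ explicitly. Using the first Born iterate $R_1(x,z)=-\int g_z(x-y)h(y,e_0)\,\rmd y$ together with the Lipschitz condition (2) on $h$, the difference $T_{h,1}(\xi)-T_{h,0}(\xi)$ splits into a term containing $h(x,e_0)R_1(x,z)$ and a term containing $\big(h(x,e_0|1+R_1|)-h(x,e_0)\big)(1+R_1)$; the latter is $\ohbig(|z|^{-2\gamma})$ pointwise and contributes (as above) a $C^\infty$ piece, so the essential object is
$$
\int_{\R^2}\rme^{\rmi(x,\xi)}h(x,e_0)\,R_1(x,z)\,\rmd x
= -\int_{\R^2}\!\!\int_{\R^2}\rme^{\rmi(x,\xi)}h(x,e_0)\,g_z(x-y)\,h(y,e_0)\,\rmd y\,\rmd x.
$$
This is the standard "second Born term'' and the whole point of the fixed-energy analysis is to estimate it. The strategy, following the linear theory (Nachman, Bukhgeim, and the three-dimensional model in \cite{serov2012}), is to use the explicit Fourier representation of $g_z$, change variables so that the phase of the exponential combines with the singular denominator $\xi^2+2(z,\xi)$, and exploit the smoothing in $\xi$: one expects to gain essentially one derivative over the $L^2$-regularity of $h$, which is exactly the statement $q_{B,1}^f-h_0\in H^t$ for all $t<1$. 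Concretely I would bound the $H^t$-norm by estimating $\|\,|\xi|^t\,\widehat{(q_{B,1}^f-h_0)}\,\|_{L^2}$, reducing via Parseval and the convolution structure to a bilinear estimate on the operator with kernel $g_z$, and invoking Proposition~\ref{conv_est} (with $\gamma$ close to $1$) together with the compact support of $h$ and the $L^2(\Omega)$-bounds on $\alpha,\beta$.

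The main obstacle I anticipate is precisely this last bilinear/smoothing estimate for the second Born term: Proposition~\ref{conv_est} gives an $L^2\to L^\infty$ bound for $g_z\ast(\cdot)$ with decay $|z|^{-\gamma}$, but to land in $H^t(\R^2)$ one needs to control the $\xi$-derivative (or fractional derivative) of $T_{h,1}(\xi)-h_0$ uniformly, which requires either differentiating under the integral sign in a regime where $g_z$ is singular, or a more delicate stationary-phase / resolvent-type analysis of $\int \rme^{\rmi(x,\xi)} h(x,e_0) g_z(x-y) h(y,e_0)\,\rmd x\,\rmd y$ as $|\xi|\to\infty$. The compact support of $h$ is essential here (it lets one treat $h(\cdot,e_0)$ as an $L^2(\Omega)$ function with the weight $e_0=\rme^{\frac12(x_1\xi_2-x_2\xi_1)}$ bounded on $\Omega$), and the restriction $t<1$ rather than $t\le 1$ reflects the loss inherent in the $|z|^{-\gamma}$ estimate with $\gamma<1$; endpoint $t=1$ would need $\gamma=1$, which is false. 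Modulo $C^\infty$ is what lets us discard all higher Born iterates and the nonlinear Lipschitz remainders, so the proof reduces entirely to the single explicit double integral above.
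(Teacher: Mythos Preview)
Your decomposition matches the paper's, but the roles you assign to the two pieces and the tools you invoke are both off.

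First, $q_B^f-q_{B,1}^f$ is \emph{not} $C^\infty$. The estimate $\|R-R_1\|_{L^\infty}\le c|z|^{-2\gamma}$ from Theorem~\ref{thm1} gives $|T_h(\xi)-T_{h,1}(\xi)|\le c|\xi|^{-2\gamma}$, and that is all: $T_h-T_{h,1}$ does not lie in $|\xi|^N L^1$ for every $N$. You seem to conflate $T_h-T_{h,1}$ with $T_h-T_{h,j}$ for large $j$; subtracting more iterates would force you to analyse each $T_{h,j}-T_{h,j-1}$, i.e.\ every higher Born term, which is harder, not easier. The paper (Lemma~\ref{lemma2}) simply uses the $|\xi|^{-2\gamma}$ decay to place $q_B^f-q_{B,1}^f$ in $H^t(\R^2)$ for $t<2\gamma-1$, hence for all $t<1$. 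The only $C^\infty$ contribution in the whole argument is $\F^{-1}\big((\chi-1)T_{h,0}\big)$, the low-frequency correction coming from the cutoff $|\xi|\ge C_0$.

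Second, in $q_{B,1}^f-h_0$ your claim that the nonlinear remainder $\big(h(x,e_0|1+R_1|)-h(x,e_0)\big)(1+R_1)$ is $\ohbig(|z|^{-2\gamma})$ is wrong under the Lipschitz hypothesis alone: that only yields $\ohbig(\beta(x)|R_1|)=\ohbig(|z|^{-\gamma})$, which is far too slow (its inverse Fourier transform is not even in $L^2$). This is exactly why the theorem invokes ``subsequent conditions'': the paper assumes a second-order Taylor expansion $h(x,e_0(1+s))=h(x,e_0)+\partial_sh|_{s=0}\,s+\ohbig(\beta_1 s^2)$, so that after subtracting the linear-in-$s$ term the remainder is genuinely quadratic in $R_1$ and decays like $|\xi|^{-2\gamma}$. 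The linear term $\partial_sh|_{s=0}\,s$ does \emph{not} disappear; it must be grouped with $h\,R_1$ and treated together.

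Third, the analysis of this first-order piece does not reduce to Proposition~\ref{conv_est}. The paper uses the explicit $\bar\partial$-factorisation of $R_1$ in \eqref{R1explicit}, writes the bilinear term as a $\frac{1}{\bar z}$-convolution of products like $\tilde\alpha_0\,\bar\partial^{-1}\alpha_0$ and $\partial\bar\partial^{-1}\alpha_0\cdot\partial^{-1}\tilde\alpha_0$, and then applies the Nirenberg--Walker mapping properties (Propositions~\ref{prop1} and~\ref{prop2}) to land in $W^1_r(\R^2)$, $r<2$, which embeds in $H^t(\R^2)$, $t<1$. Your proposed route via Proposition~\ref{conv_est} plus ``stationary-phase/resolvent-type analysis'' does not supply the required gain of one derivative; the $\bar\partial$-representation is the mechanism that does.
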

\begin{rem}
The embedding theorem for Sobolev spaces says that the difference $q_B^f(x)- h_0(x)$ belongs to $L^q(\R^2)$ for any $q<\infty$ by modulo $C^{\infty}(\R^2)$-functions. It means that all singularities from $L^p_{loc}(\R^2), p<\infty,$ of unknown function $h_0$ can be obtained exactly by the Born approximation which corresponds to the inverse scattering problem with fixed positive energy.
\end{rem}

In what follows it will be required an explicit form of the term $R_1(x,z)$. In order to obtain such a form we use the so-called $\bar{\partial}$ approach to inverse scattering. If we choose $z$ as in Definition \ref{def1} and Lemma \ref{lemma1} then a straightforward computation shows that 
$$
2\bar{\partial}(2\partial + (\xi_2+\rmi\xi_1))R_1=h(x,e_0)
$$
where
$$
\bar{\partial}=\frac{1}{2}\Big(\partial_{x_1} + \rmi\partial_{x_2}\Big),\
\partial=\frac{1}{2}\Big(\partial_{x_1} - \rmi\partial_{x_2}\Big).
$$
This formula leads to 
\begin{equation}\label{R1explicit}
R_1=\frac{1}{2}(\xi_2+\rmi\xi_1)^{-1}(\bar{\partial}^{-1}h(x,e_0) - \rme^{-\rmi(x,\xi)}\partial^{-1}(\rme^{\rmi(x,\xi)}\partial\bar{\partial}^{-1}h(x,e_0))).
\end{equation}

The following results by Nirenberg and Walker \cite{nirenberg-walker}  will be used.
\begin{prop}\label{prop1}
Let $p>1$, $\delta \in \R$ with $-\frac{2}{p}<\delta<1-\frac{2}{p}.$ Then 
$$
(\bar\partial^{-1})f(\xi) =- \frac{1}{\pi}\int_{\R^2}{\frac{f(\eta)}{\eta-\xi}\rmd \eta}
$$ 
defines a bounded operator from $L^p_{\delta+1}(\R^2)$ to $W^1_{p,\delta+1}(\R^2).$
\end{prop}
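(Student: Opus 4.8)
The plan is to read off the two first-order derivatives of $u:=\bar\partial^{-1}f$ directly, treating $u$ itself separately. By construction the kernel $-\frac{1}{\pi(\eta-\xi)}$ is the fundamental solution of $\bar\partial$ on $\R^2\cong\C$, so $\bar\partial u=f$ holds identically. Accordingly I would split the target norm as $\|u\|_{W^1_{p,\delta+1}}\simeq \|u\|_{L^p_\delta}+\|\partial u\|_{L^p_{\delta+1}}+\|\bar\partial u\|_{L^p_{\delta+1}}$; recall that in the Nirenberg--Walker scale the top-order derivative carries one extra power of the weight, so the function itself lands in $L^p_\delta$ (this is in fact forced by scaling, as the Stein--Weiss balance below shows). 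The last term is then free, since $\bar\partial u=f\in L^p_{\delta+1}$ by hypothesis. This reduces the proposition to two independent estimates: a singular-integral bound for $\partial u$ and a weakly singular, fractional-integral bound for $u$.

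For the derivative $\partial u=\partial\bar\partial^{-1}f$, observe that $\partial\bar\partial^{-1}$ is the Beurling--Ahlfors transform, i.e.\ the Calder\'on--Zygmund operator with principal-value kernel $-\frac{1}{\pi}\,\mathrm{p.v.}(\eta-\xi)^{-2}$ and degree-zero, mean-zero symbol $\zeta/\bar\zeta$. I would therefore invoke the weighted $L^p$ theory for such operators: the bound $\|\partial u\|_{L^p_{\delta+1}}\le C\|f\|_{L^p_{\delta+1}}$ is equivalent to boundedness of the Beurling transform on $L^p(w\,\mathrm dx)$ with $w=(1+|x|)^{p(\delta+1)}$, and this weight satisfies the Muckenhoupt condition $w\in A_p(\R^2)$ exactly when $-2<p(\delta+1)<2(p-1)$, that is $-\tfrac{2}{p}-1<\delta<1-\tfrac{2}{p}$, an interval strictly containing the stated range.

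For $u$ itself the kernel $|\eta-\xi|^{-1}$ is, up to a constant, the Riesz kernel $I_1$ on $\R^2$ (here $n-\alpha=1$ gives $\alpha=1$), so I would control it by the Stein--Weiss weighted fractional-integral inequality with equal exponents $p=q$. With output weight $|x|^{-\beta}=|x|^{\delta}$ and input weight $|x|^{\gamma}=|x|^{\delta+1}$, the dimensional balance $\beta+\gamma=\alpha$ reads $-\delta+(\delta+1)=1=\alpha$, which holds automatically, while the admissibility conditions $\beta<n/q$ and $\gamma<n/p'$ become $\delta>-\tfrac{2}{p}$ and $\delta<1-\tfrac{2}{p}$. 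Thus the hypothesis on $\delta$ is not merely sufficient but is precisely the Stein--Weiss range making $I_1\colon L^p_{\delta+1}\to L^p_\delta$ bounded, which explains why the interval must appear exactly as stated.

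The main obstacle, and the only genuinely technical point, is that the Nirenberg--Walker weights are the inhomogeneous $(1+|x|)^s$, whereas both Stein--Weiss and the sharp $A_p$ computation are cleanest for the homogeneous power weights $|x|^s$. I would bridge this by decomposing $\R^2$ into $\{|x|\le 1\}$ and $\{|x|>1\}$: on the far region $(1+|x|)^s$ and $|x|^s$ are comparable, so the homogeneous estimates apply verbatim; on the unit ball the weights are bounded above and below, and the remaining local mapping is handled by Young's inequality, using that $|\cdot|^{-1}\in L^r_{\mathrm{loc}}$ for $r<2$ while the near/far cross terms have non-singular kernels. Care is needed at the two endpoints of the $\delta$-interval, where the power-weight estimates degenerate; staying in the open range keeps every application of Stein--Weiss and of the $A_p$ bound strictly admissible, so no delicate endpoint argument is required.
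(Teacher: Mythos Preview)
The paper does not supply its own proof of this proposition: it is quoted verbatim as a result of Nirenberg and Walker \cite{nirenberg-walker} and used as a black box in the proof of Lemma~\ref{lemma3}. There is therefore nothing in the paper to compare your argument against.

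That said, your plan is sound and is essentially how one reproves the Nirenberg--Walker mapping property in two dimensions. The decomposition $\bar\partial u=f$, $\partial u=(\partial\bar\partial^{-1})f$, $u=I_1$-type potential is the natural one; the Beurling--Ahlfors piece is controlled by weighted Calder\'on--Zygmund theory with the power weight in $A_p$, and the zero-order piece by Stein--Weiss with equal exponents, whose admissibility window $-\tfrac{2}{p}<\delta<1-\tfrac{2}{p}$ is precisely the hypothesis. Your handling of the homogeneous/inhomogeneous weight discrepancy by a near/far split is also standard and correct. One small point worth making explicit in a write-up: the conclusion ``$u\in L^p_\delta$'' rather than ``$u\in L^p_{\delta+1}$'' relies on the Nirenberg--Walker convention that in $W^1_{p,\sigma}$ the weight on $D^\alpha u$ is $(1+|x|)^{\sigma+|\alpha|-1}$ (equivalently, the subscript records the weight on the top-order derivatives). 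You assert this convention but do not justify it; since the paper never defines $W^1_{p,\delta+1}$ explicitly, you should state the definition you are using so the reader can verify that the Stein--Weiss balance $(-\delta)+(\delta+1)=1$ is the relevant one.
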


\begin{prop}\label{prop2}
Let $p$, $\delta$ as above.Then 
$$
(\partial\bar\partial^{-1})f(\xi)= -\frac{1}{\pi}\text{p.v.}\int_{\R^2}{\frac{f(\eta)}{(\eta-\xi)^2}\rmd \eta}
$$ 
defines a bounded operator from $L^p_{\delta+1}(\R^2)$ to $L^p_ {\delta+1}(\R^2).$
\end{prop}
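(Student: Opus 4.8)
The operator $\partial\bar\partial^{-1}$ is, after the identification $\R^2\cong\C$ with $\xi=\xi_1+\rmi\xi_2$, exactly the Beurling--Ahlfors transform
$$
(\mathcal{S}f)(\xi)=-\frac{1}{\pi}\,\text{p.v.}\int_{\R^2}\frac{f(\eta)}{(\eta-\xi)^2}\,\rmd\eta,
$$
a convolution operator whose Fourier multiplier is $\overline{\zeta}/\zeta$, of modulus one. My plan is therefore to treat $\mathcal{S}$ as a Calder\'on--Zygmund singular integral and to reduce the weighted bound to a Muckenhoupt $A_p$ condition on the weight defining $L^p_{\delta+1}$.

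First I would record the two unweighted facts. Since the multiplier has modulus $1$, Plancherel gives that $\mathcal{S}$ is bounded (indeed an isometry) on $L^2(\R^2)$. The kernel $K(\xi)=-\tfrac1\pi\xi^{-2}$ is smooth away from the origin, homogeneous of degree $-2$ (the dimension), and has mean zero on every circle $|\xi|=r$, since $\int_0^{2\pi}\rme^{-2\rmi\theta}\,\rmd\theta=0$; hence it is a standard Calder\'on--Zygmund kernel and the principal value exists a.e. Together with the $L^2$ bound, the Calder\'on--Zygmund theory yields boundedness of $\mathcal{S}$ on the unweighted $L^p(\R^2)$ for every $1<p<\infty$.

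The core of the argument is the weighted step. Writing the norm of $L^p_{\delta+1}$ as $\norm{f}_{L^p_{\delta+1}}^p=\int_{\R^2}|f(\xi)|^p\,w(\xi)\,\rmd\xi$ with $w(\xi)=\langle\xi\rangle^{(\delta+1)p}$ and $\langle\xi\rangle=(1+|\xi|^2)^{1/2}$, it suffices, by the Coifman--Fefferman weighted norm inequality for Calder\'on--Zygmund operators, to show that $w\in A_p(\R^2)$. For a smooth power weight $\langle\xi\rangle^a$ in $\R^2$ the $A_p$ condition is equivalent to $-2<a<2(p-1)$ (near the origin the weight is comparable to $1$ and imposes nothing; the constraint comes from averaging over large balls, where the computation reduces to that of the homogeneous weight $|\xi|^a$). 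Substituting $a=(\delta+1)p$, the right-hand inequality $(\delta+1)p<2(p-1)$ is exactly $\delta<1-\tfrac2p$, and the left-hand inequality $(\delta+1)p>-2$ follows from $\delta>-\tfrac2p$ (which even gives $(\delta+1)p>p-2>-2$). Thus the hypothesis $-\tfrac2p<\delta<1-\tfrac2p$ is precisely what places $w$ in $A_p(\R^2)$, and the weighted bound $\norm{\mathcal{S}f}_{L^p_{\delta+1}}\le C\norm{f}_{L^p_{\delta+1}}$ follows.

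I expect the main obstacle to be the clean verification of the $A_p$ membership with the stated exponent range (matching the tight upper endpoint $\delta=1-\tfrac2p$ to the $A_p$ upper bound $a=2(p-1)$), together with the justification that the principal-value singular integral genuinely represents $\partial\bar\partial^{-1}$, i.e. that $\partial$ applied to the Cauchy kernel $-\tfrac1\pi(\eta-\xi)^{-1}$ produces the p.v. kernel $-\tfrac1\pi(\eta-\xi)^{-2}$ with no additional point mass. As a shorter alternative route one may avoid the $A_p$ machinery altogether: by Proposition \ref{prop1}, $\bar\partial^{-1}f\in W^1_{p,\delta+1}(\R^2)$ with controlled norm, and since $\partial$ is a combination of first-order derivatives, $\partial\bar\partial^{-1}f=\partial(\bar\partial^{-1}f)$ lies in $L^p_{\delta+1}(\R^2)$ directly from the definition of the weighted Sobolev norm, giving the claim.
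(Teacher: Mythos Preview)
Your proposal is correct on both routes. The paper itself does not prove this proposition; it is simply quoted from Nirenberg and Walker \cite{nirenberg-walker}, so there is no in-paper argument to compare against.

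Of your two approaches, the $A_p$/Coifman--Fefferman route is the standard modern one, and your exponent bookkeeping is right: with $w(\xi)=\langle\xi\rangle^{(\delta+1)p}$, the $A_p(\R^2)$ window $-2<(\delta+1)p<2(p-1)$ gives exactly $\delta<1-\tfrac2p$ on the top, while the hypothesis $\delta>-\tfrac2p$ already overshoots the lower endpoint. The alternative you sketch at the end---simply reading off $\partial(\bar\partial^{-1}f)\in L^p_{\delta+1}$ from Proposition~\ref{prop1}---is closer in spirit to how the two statements appear together in \cite{nirenberg-walker}; the only caveat is to verify the weight convention on the gradient term in the $W^1_{p,\delta+1}$ norm, since in some weighted Sobolev scales the derivative carries an extra power of the weight, in which case the shortcut does not land in $L^p_{\delta+1}$ without an additional step.
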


\begin{lemma}\label{lemma2}
Under the conditions for the potential function $h$ as in Theorem \ref{thm1}, the difference $q_B^f(x)-q_{B,1}^f(x)$ belongs to the Sobolev space $H^t(\R^2)$ for any $t<1$.
\end{lemma}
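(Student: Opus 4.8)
\section*{Proof proposal for Lemma \ref{lemma2}}

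The plan is to pass to the Fourier side and reduce the claim to a single pointwise decay estimate. With $q_{B,1}^f=\F^{-1}(T_{h,1})$, Definition \ref{def1} gives $\F(q_B^f-q_{B,1}^f)=T_h-T_{h,1}$, which is supported in $\{|\xi|\ge C_0\}$ by the cut-off in the definition of the scattering transform. Hence
$$
\norm{q_B^f-q_{B,1}^f}_{H^t(\R^2)}^2 = c\int_{|\xi|\ge C_0}(1+|\xi|^2)^t\,\big|T_h(\xi)-T_{h,1}(\xi)\big|^2\,\rmd\xi ,
$$
and everything comes down to how fast $T_h(\xi)-T_{h,1}(\xi)$ decays as $|\xi|\to\infty$.

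Next I would estimate the integrand of $T_h-T_{h,1}$ pointwise. Adding and subtracting $h(x,e_0|1+R|)(1+R_1)$ and using condition (2) for the difference of the potential values and condition (1) for $h(x,e_0|1+R_1|)$ itself, one obtains
$$
\big| h(x, e_0|1+R|)(1+R) - h(x, e_0|1+R_1|)(1+R_1) \big|
$$
$$
\le \big(\beta(x)\,(1+\norm{R}_{L^\infty(\R^2)}) + \alpha(x)\big)\,|R(x,z) - R_1(x,z)| .
$$
Here the boundedness of $\Omega$ makes $\alpha,\beta\in L^1(\Omega)$, and $\norm{R}_{L^\infty}$ is bounded by Theorem \ref{thm1}; therefore integrating over $\Omega$ gives $|T_h(\xi)-T_{h,1}(\xi)|\le C\,\norm{R-R_1}_{L^\infty(\R^2)}$ with $C$ independent of $\xi$ (note that conditions (1)--(2) are uniform in the modulus slot, so the growth of $e_0$ in $\xi$ is harmless). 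Now invoke the second estimate in \eqref{R_estimates} with $j=1$, namely $\norm{R-R_1}_{L^\infty(\R^2)}\le c\,|z|^{-2\gamma}$, together with $|z|^2=\tfrac12|\xi|^2$ coming from $z=\tfrac12(\xi-\rmi J\xi)$. This yields $|T_h(\xi)-T_{h,1}(\xi)|\le C'\,|\xi|^{-2\gamma}$ for $|\xi|\ge C_0$. The point worth stressing is that subtracting the \emph{first} Born iterate $R_1$, rather than $R_0=0$, is what doubles the exponent from $\gamma$ to $2\gamma$, and this extra power is exactly what the argument needs.

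Feeding this bound into the weighted integral above and using polar coordinates,
$$
\norm{q_B^f-q_{B,1}^f}_{H^t(\R^2)}^2 \le C''\int_{|\xi|\ge C_0}(1+|\xi|^2)^t|\xi|^{-4\gamma}\,\rmd\xi \le C'''\int_{C_0}^{\infty}\tau^{2t-4\gamma+1}\,\rmd\tau ,
$$
which is finite precisely when $t<2\gamma-1$. Finally, given any $t<1$, since $\tfrac{t+1}{2}<1$ one fixes $\gamma\in\big(\tfrac{t+1}{2},1\big)$ and runs the whole argument with Proposition \ref{conv_est} and Theorem \ref{thm1} for that $\gamma$; changing $\gamma$ changes the threshold $C_0=C_0(\gamma)$ only over a compact frequency range on which $T_h-T_{h,1}$ is anyway bounded, so the low frequencies contribute a finite amount. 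This gives $q_B^f-q_{B,1}^f\in H^t(\R^2)$ for every $t<1$. I do not expect a genuine obstacle here: the only care needed is the bookkeeping just described (uniformity of the pointwise bound in $\xi$, and the $\gamma$-dependence of the cut-off). The real content is the elementary observation that one additional Born iteration upgrades the decay from $|\xi|^{-\gamma}$ to $|\xi|^{-2\gamma}$, which, balanced against the two-dimensional volume element $\tau\,\rmd\tau$ and the Sobolev weight $\tau^{2t}$, closes the estimate for all $t<2\gamma-1$ and hence for all $t<1$.
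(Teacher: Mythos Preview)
Your proof is correct and follows essentially the same route as the paper: pass to the Fourier side, bound $|T_h(\xi)-T_{h,1}(\xi)|$ pointwise by splitting the product $h(\cdot)(1+R)-h(\cdot)(1+R_1)$ via conditions (1)--(2), invoke the $j=1$ estimate $\|R-R_1\|_{L^\infty}\le c|z|^{-2\gamma}$ from Theorem \ref{thm1}, and integrate in polar coordinates to obtain convergence for $t<2\gamma-1$. Your handling of the $\gamma$-dependence of $C_0$ is in fact more explicit than the paper's, which simply asserts ``since $\gamma<1$ we finally get $t<1$'' without discussing the moving cut-off; your observation that the discrepancy lives on a compact frequency annulus where $T_h-T_{h,1}$ is bounded is the right way to close that gap.
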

\begin{proof}      
\begin{align*}
\|q_B^f(x)-q_{B,1}^f(x)\|_{H^t(\R^2)}^2 & = \|(1+|\xi|^2)^{\frac{t}{2}}\F\big(q_B^f(x)-q_{B,1}^f(x)\big)(\xi)\|_{L^2(\R^2)}^2\\ &=
\int_{|\xi|>C_0}{(1+|\xi|^2)^t}\\ 
& \quad\quad 
{\Big|\int_{\R^2}{e^{\rmi(x,\xi)}(h(x,e_0|1+R|)(1+R) - h(x,e_0|1+R_1|)(1+R_1))}\rmd x\Big|^2\rmd\xi}. 
\end{align*}
Denoting the inner integral by $I$ we estimate
\begin{align*}
|I|\leq & \int_{\R^2}{\Big(|h(x,e_0|1+R|) - h(x,e_0|1+R_1|)| + |h(x,e_0|1+R|) - h(x,e_0|1+R_1|)||R_1|}\\ &\quad
+ |h(x,e_0|1+R|)||R-R_1|\Big)\rmd x\\ \leq &
\|\beta(x)\|_{L^1(\Omega)}\|R-R_1\|_{{L^\infty}(\R^2)} + \|\beta(x)\|_{L^1(\Omega)}\|R_1\|_{{L^\infty}(\R^2)}\|R-R_1\|_{{L^\infty}(\R^2)}\\ &\quad
 +\|\alpha(x)\|_{L^1(\Omega)}\|R-R_1\|_{{L^\infty}(\R^2)} \\ \leq &
\frac{c}{|\xi|^{2\gamma}},
\end{align*}
where $|\xi|>C_0$. Here we used the fact that a compact supported function from $L^2$-space belongs also to $L^1$-space. 
So, we have
$$
\|q_B^f(x)-q_{B,1}^f(x)\|^2_{H^t(\R^2)} \leq  c\int_{|\xi|>C_0}{\frac{(1+|\xi|^2)^t}{|\xi|^{4\gamma}}}\,d\xi.
$$
The last integral converges for $4\gamma-2t>2$ or $t<2\gamma-1$ and since $\gamma<1$ we  finally get that
$q_B^f(x)-q_{B,1}^f(x) \in H^t(\R^2)$ for $t<1$.
\end{proof}


In order to investigate the term $q_{B,1}^f - h_0$ from the Born approximation we assume a little bit more about the potential function $h$. Namely, we assume formerly the following Taylor expansion
$$
h(x,e_0(1+s)) =h(x,e_0)+ \partial _sh(x,e_0(1+s))|_{s=0} s+ \ohbig(\beta_1(x) s^2),
$$
where
$|\partial _sh(x,e_0(1+s))|_{s=0}|\le \beta_1(x)$ and $\ohbig(\beta_1(x) s^2)$ with $\beta_1(x)\in L^2(\Omega)$ and with small $s$ in the neighborhood of zero and where $\ohbig$ is uniform in $x\in \Omega$ and such $s$. 
\begin{lemma}\label{lemma3}
Under the above conditions for the potential function $h$ the term $q_{B,1}^f - h_0$ belongs to  $H^t(\R^2)$ for any $t<1$ modulo $C^{\infty}(\R^2)$.
\end{lemma}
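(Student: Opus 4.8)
The plan is to decompose $q_{B,1}^f - h_0$ using the explicit formula \eqref{R1explicit} for $R_1$, together with the assumed Taylor expansion of $h$ around $s=0$. Recall that
$$
q_{B,1}^f(x) - h_0(x) = \F^{-1}\Big(\int_{\R^2}{\rme^{\rmi(x,\xi)}\big(h(x,e_0|1+R_1|)(1+R_1) - h(x,e_0)\big)\rmd x}\Big)(x),
$$
with the $\xi$-integration restricted to $|\xi|\ge C_0$ and with $R_1 = R_1(x,z)$ evaluated at $z = \frac12(\xi - \rmi J\xi)$. First I would substitute the Taylor expansion with $s$ replaced by $R_1$ (legitimate for $|\xi|\ge C_0$ since $\|R_1\|_{L^\infty}\le c|\xi|^{-\gamma}$ is small), obtaining
$$
h(x,e_0|1+R_1|)(1+R_1) = h(x,e_0) + h(x,e_0)R_1 + \partial_s h(x,e_0(1+s))|_{s=0}\,\mathrm{Re}(R_1) + \ohbig(\beta_1(x)|R_1|^2) + \cdots,
$$
where I also use $|1+R_1| = 1 + \mathrm{Re}(R_1) + \ohbig(|R_1|^2)$. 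The quadratic remainder term contributes a piece bounded by $c|\xi|^{-2\gamma}$, exactly as in Lemma \ref{lemma2}, hence lies in $H^t(\R^2)$ for $t<2\gamma-1$, so for all $t<1$ after optimizing over $\gamma<1$. The genuinely delicate contributions are the two linear-in-$R_1$ terms.

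Next I would handle the linear terms using the explicit representation \eqref{R1explicit}, which expresses $R_1$ as $\frac12(\xi_2+\rmi\xi_1)^{-1}$ times a combination of $\bar\partial^{-1}h(x,e_0)$ and $\rme^{-\rmi(x,\xi)}\partial^{-1}(\rme^{\rmi(x,\xi)}\partial\bar\partial^{-1}h(x,e_0))$. Plugging this in and collecting the $\rme^{\rmi(x,\xi)}$ factors, the contribution of these terms to the integral over $x$ becomes a $\xi$-dependent expression that, after using Propositions \ref{prop1} and \ref{prop2} to control $\bar\partial^{-1}$ and $\partial\bar\partial^{-1}$ in weighted $L^p$ spaces, can be shown to decay in $|\xi|$ like $|\xi|^{-1}$ (coming from the $(\xi_2+\rmi\xi_1)^{-1}$ prefactor) modulo a smoother remainder. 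More precisely I expect that, as in the linear Bukhgeim/Nachman scheme, the leading part of this linear contribution is a function of $\xi$ whose inverse Fourier transform is smooth (it is, up to a reflection, essentially $\partial\bar\partial^{-1}$ applied to a fixed compactly supported $L^2$ function, which one checks is in $C^\infty$ modulo lower order), while the error is $\ohbig(|\xi|^{-1})$ in a suitable sense; squaring and integrating $(1+|\xi|^2)^t |\xi|^{-2}$ over $|\xi|\ge C_0$ converges for $t<0$, and a more careful bookkeeping of the $\partial\bar\partial^{-1}$ boundedness (which does not gain decay but preserves $L^2$) combined with the $|\xi|^{-1}$ gain should push the Sobolev exponent up to any $t<1$.

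The main obstacle will be the second, non-local linear term $\rme^{-\rmi(x,\xi)}\partial^{-1}(\rme^{\rmi(x,\xi)}\partial\bar\partial^{-1}h(x,e_0))$: because $\partial^{-1}$ is conjugated by the oscillatory factor $\rme^{\rmi(x,\xi)}$, one cannot simply invoke Proposition \ref{prop2} directly, and one must track how the operator $\rme^{-\rmi(x,\xi)}\partial^{-1}\rme^{\rmi(x,\xi)}$ acts — effectively it is $(\partial + \tfrac{\rmi}{2}(\xi_1 - \rmi\xi_2))^{-1}$, i.e. a resolvent of $\partial$ at spectral parameter proportional to $\xi$. Getting a bound on this operator that is uniform in the direction of $\xi$ and that produces the $|\xi|^{-1}$ gain needed for the $H^t$, $t<1$, estimate is the crux; I would model this step on the corresponding argument in \cite{serov2012} for the three-dimensional case and on the stationary-phase/weighted-estimate manipulations in Nachman's and Bukhgeim's work, adapting the weights $\delta$ in Propositions \ref{prop1}–\ref{prop2} so that $h(x,e_0) \in L^p_{\delta+1}(\R^2)$ (which holds since $h(\cdot,e_0)$ is compactly supported and in $L^2$, hence in every such weighted $L^p$, $1<p\le 2$). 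Once these two linear terms are shown to be $H^t(\R^2)$ modulo $C^\infty$, the lemma follows by adding the three contributions.
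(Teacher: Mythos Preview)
Your decomposition into quadratic remainder plus linear-in-$R_1$ terms matches the paper's, and your treatment of the quadratic piece is correct. But your plan for the linear terms contains a genuine gap and a wrong expectation.

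First, the leading linear contribution is \emph{not} $C^\infty$. The paper does not try to show smoothness here; instead it recognises that the prefactor $(\xi_2+\rmi\xi_1)^{-1}$ is (up to a constant) the Fourier transform of $1/\bar z$, so that after inverse Fourier transform this factor becomes convolution with $c/\bar z$, i.e.\ the operator $\bar\partial^{-1}$. Thus the leading linear piece is of the form $\bar\partial^{-1}\big(\tilde\alpha_0\,\bar\partial^{-1}\alpha_0\big)$, which by Proposition~\ref{prop1} lies in $W^1_r(\R^2)$ for $r<2$ and hence in $H^t(\R^2)$ for $t<1$. The gain of one derivative comes from this $\bar\partial^{-1}$, not from $|\xi|^{-1}$ decay; your attempt to integrate $(1+|\xi|^2)^t|\xi|^{-2}$ only gives $t<0$, and the ``more careful bookkeeping'' you allude to is precisely this Fourier-multiplier reinterpretation, which you have not identified.

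Second, for the non-local term involving $\partial^{-1}\big(\rme^{\rmi(x,\xi)}\partial\bar\partial^{-1}h(x,e_0)\big)$ the paper does not analyse the conjugated resolvent $(\partial+\tfrac{\rmi}{2}(\xi_1-\rmi\xi_2))^{-1}$ at all. It simply writes $\partial^{-1}$ as an integral operator and applies Fubini to swap the $x$ and $\eta$ integrations, turning the expression back into an ordinary Fourier transform $\F\big(\partial\bar\partial^{-1}\alpha_0\cdot\partial^{-1}\tilde\alpha_0\big)(\xi)$; then Propositions~\ref{prop1}--\ref{prop2} and the same $\bar\partial^{-1}$-via-convolution argument finish the job. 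Your stationary-phase/resolvent route may be feasible but is neither sketched nor needed.

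Finally, note that the paper introduces, mid-proof, an additional structural hypothesis: an asymptotic expansion $h(x,e_0)=\sum_j\alpha_j(x)|\xi|^{-j}$ (and similarly for $h(x,e_0|1+R_1|)$). This is what allows separating out a $\xi$-independent leading coefficient $\alpha_0(x)$ so that the inner integral becomes a genuine Fourier transform. Without such an assumption the $\xi$-dependence of $e_0$ is entangled with $h$, and none of the above manipulations go through cleanly; your proposal does not address this point.
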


\begin{proof}
Let us denote by $s=|1+R_1(x,z)|-1$. Since $R_1(x,z)\to 0$ as $|z|\to \infty$ uniformly in $x\in\R^2$ then $s$ is small enough. Recalling that $T_{h,1}(\xi)=0$ for $|\xi|<C_0$ and denoting by $\chi(\xi)$ the characteristic function of the set $\{\xi\in \R^2:|\xi|>C_0\}$ we obtain
\begin{align*}
q_{B,1}^f(x)&=\F^{-1}\Big(\chi(\xi) \int_{\R^2}{\rme^{\rmi(x,\xi)}h(x,e_0|1+R_1|)(1+R_1)\rmd x} \Big)(x)\\&=
 h_0(x)+\F^{-1}\Big((\chi(\xi)-1)T_{h,0}(\xi)\Big)(x)\\&\quad+ 
 \F^{-1}\Big(\chi(\xi) \int_{\R^2}{\rme^{\rmi(x,\xi)}\big(h(x,e_0(1+s)) - h(x,e_0) - \partial_sh(x,e_0(1+s))|_{s=0}s\big)\rmd x} \Big)(x)\\&\quad+
 \F^{-1}\Big(\chi(\xi)\int_{\R^2}{\rme^{\rmi(x,\xi)}\big(h(x,e_0(1+s))R_1  + \partial_sh(x,e_0(1+s))|_{s=0}s\big)\rmd x} \Big)(x)\\&=h_0(x)+
 \F^{-1}\Big((\chi(\xi)-1)T_{h,0}(\xi)\Big)(x) + \F^{-1}\Big(\chi(\xi)\int_{\R^2}{\rme^{\rmi(x,\xi)}\ohbig(\beta_1(x)s^2)\rmd x} \Big)(x) \\&\quad +
 \F^{-1}\Big( \chi(\xi)\int_{\R^2}{\rme^{\rmi(x,\xi)}\big(h(x,e_0(1+s))R_1  + \partial_sh(x,e_0(1+s))|_{s=0}s\big)\rmd x} \Big)(x)\\ &=:
h_0(x) + I_1+I_2+I_3.
\end{align*}
The term $I_1$ is the Fourier transform of compactly supported distribution and thus, it is $C^{\infty}$-function in $\R^2$.
Since $s^2=\ohbig(|R_1|^2)$ then 
$$
|\F{I_2}| \leq \frac{c}{|\xi|^{2\gamma}},\quad |\xi|>C_0,
$$
and $I_2$ can be estimated in the Sobolev space $H^t(\R^2 )$as follows
\begin{align*}
\norm{I_2}_{H^t(\R^2)}^2 = \|(1+|\xi|^2)^{\frac{t}{2}}\F I_2\|_{L^2(\R^2)}^2\leq c\int_{|\xi|>C_0}{\frac{(1+|\xi|^2)^t}{|\xi|^{4\gamma}} \rmd \xi}<\infty
\end{align*}
for any $t<1$.

For the term $I_3$ we have the representation
\begin{align*}
I_3&=\F^{-1}\Big(\chi(\xi) \int_{\R^2}{\rme^{\rmi(x,\xi)}\big(h(x,e_0|1+R_1|) R_1 + \partial_sh(x,e_0(1+s))|_{s=0}ReR_1+\ohbig(\beta_1(x)|R_1|^2)\big)\rmd x} \Big)(x)\\&=:
I_3^{'} + I_3^{''}+I_3^{'''},
\end{align*}
where we have used that
$$
s=|1+R_1|-1=\frac{2\text{Re} R_1+|R_1|^2}{|1+R_1|+1}=ReR_1+\ohbig(|R_1|^2).
$$
The term $I_3^{'''}$ belongs to the Sobolev space $H^t(\R^2)$ for any $t<1$. The proof is completely the same as for the term $I_2$.
Substituting the representation \eqref{R1explicit} for $R_1$ into $I_3^{'}$ we get

\begin{align*}
I_3^{'} & =  \F^{-1}\Big( \chi(\xi)\int_{\R^2}{\rme^{\rmi(x,\xi)}h(x,e_0|1+R_1|) R_1}\rmd x\Big)(x) \\ &=
\F^{-1}\Big(\chi(\xi) \int_{\R^2}{\rme^{\rmi(x,\xi)}h(x,e_0|1+R_1|) \frac{1}{2}(\xi_2+\rmi\xi_1)^{-1}\bar{\partial}^{-1}h(x,e_0)\Big)\rmd x}\\& 
-\F^{-1}\Big(\chi(\xi) \int_{\R^2}\frac{1}{2}(\xi_2+\rmi\xi_1)^{-1}h(x,e_0|1+R_1|)\partial^{-1}(\rme^{\rmi(x,\xi)}\partial\bar{\partial}^{-1}h(x,e_0))\rmd x\Big)(x)\\&=
\frac{c_1}{\bar{z}}\ast \F^{-1}\Big(\chi(\xi) \int_{\R^2}{\rme^{\rmi(x,\xi)}h(x,e_0|1+R_1|) \bar{\partial}^{-1}h(x,e_0) \rmd x}\Big)(x) \\ &\quad +
\frac{c_2}{\bar{z}}\ast \F^{-1}\Big(\chi(\xi) \int_{\R^2}{h(x,e_0|1+R_1|)\partial^{-1}\big(\rme^{\rmi(x,\xi)}\partial\bar{\partial}^{-1}h(x,e_0)\big) \rmd x}\Big)(x) \\ & 
=: w_1 + w_2,
\end{align*}
where $z=y_1+iy_2$, $\F(\frac{1}{z})(\xi)=\frac{1}{i\xi_1-\xi_2}$ and $c_1,c_2$ are some known constants.\\

We assume now in addition a little bit more about the nonlinearity $h$ (mostly in order to avoid the technical problems). Namely, we assume that for $|\xi|>C_0$ the following asymptotical representations hold:
$$
h(x,e_0)=\sum_{j=0}^\infty\frac{\alpha_j(x)}{|\xi|^j} \quad \text{and} \quad
h(x,e_0|1+R_1|)=\sum_{j=0}^\infty\frac{\alpha_j(x)}{|\xi|^j},
$$
where $|\alpha_j(x)|\leq \alpha(x)/2$ and $|\tilde{\alpha}_j(x)|\leq \alpha(x)/2,$ where $\alpha$ is the same as in Theorem \ref{thm1}.\\
These assumptions can also be written as follows
$$
h(x,e_0) = \alpha_0(x) + \frac{\alpha(x)}{|\xi|}\ohbig(1)\quad \text{and} \\
$$
$$
h(x,e_0|1+R_1|) = \tilde{\alpha}_0(x) + \frac{\alpha(x)}{|\xi|}\ohbig(1),
$$
where
$\ohbig(1)$ depends uniformly on $x$.

Substituting the above formulas into $w_1$ yields
\begin{align*}
w_1 & = \frac{c}{\bar{z}}\ast \F^{-1}\Big(\chi(\xi)\int_{\R^2}{\rme^{\rmi(x,\xi)}\Big(\tilde{\alpha}_0(x) + \frac{\alpha(x)}{|\xi|}\ohbig(1) \Big)\bar{\partial}^{-1}\Big( \alpha_0(x) + \sum\limits_{j=1}^{\infty}\frac{\alpha_j(x)}{|\xi|^j}\Big) \rmd x}\Big)(x)\\&=
\frac{c}{\bar{z}}\ast \tilde{\alpha}_0(x)\bar{\partial}^{-1}\alpha_0(x)+\frac{c}{\bar{z}}\ast \F^{-1}(\chi-1)\ast \tilde{\alpha}_0(x)\bar{\partial}^{-1}\alpha_0(x)\\&\quad +
\frac{c}{\bar{z}}\ast \F^{-1}\Big(\frac{\chi(\xi)}{|\xi|}\int_{\R^2}{\rme^{\rmi(x,\xi)}\tilde{\alpha}_0(x) \bar{\partial}^{-1}\sum\limits_{j=1}^{\infty}\frac{\alpha_j(x)}{|\xi|^{j-1}} \rmd x}\Big)(x)
\\&\quad+
\frac{c}{\bar{z}}\ast \F^{-1}\Big(\frac{\chi(\xi)}{|\xi|}\int_{\R^2}{\rme^{\rmi(x,\xi)}\alpha(x) \ohbig(1)\bar{\partial}^{-1}\alpha_0(x) \rmd x}\Big)(x)
\\&\quad+
\frac{c}{\bar{z}}\ast \F^{-1}\Big(\frac{\chi(\xi)}{|\xi|^2}\int_{\R^2}{\rme^{\rmi(x,\xi)}\alpha(x) \ohbig(1)\bar{\partial}^{-1}\sum\limits_{j=1}^{\infty}\frac{\alpha_j(x)}{|\xi|^{j-1}}\rmd x}\Big)(x).
\end{align*}

Proposition \ref{prop1} and the Sobolev embedding theorem provide that the product\\ $\tilde{\alpha}_l(x)\bar{\partial}^{-1}\alpha_m(x),\ l,m=0,1,\ldots$ belongs to 
$$
L_{\comp}^2(\R^2)\cdot W^1_{2,\delta+1}(\R^2)\subset L^t_{\comp}(\R^2)
$$ 
for any $1\le t<2$. Thus, by Proposition \ref{prop1} the first term belongs to $W^1_{t,\delta+1}(\R^2),\ t<2.$ Here $1+\delta$ can be chosen actually equal to $0$ since $t<2$. For the second term in this sum we first remark that $\F^{-1}(\chi-1)$ belongs to the weighted space $L^2_{\sigma}(\R^2)$ with $\sigma<\frac{1}{2}$ and therefore it belongs to usual Lebesgue space $L^p(\R^2)$ with $\frac{4}{3}<p<2$. This fact implies (due to Hausdorff-Young inequality for convolution) that 
$$
\F^{-1}(\chi-1)\ast \tilde{\alpha}_0(x)\bar{\partial}^{-1}\alpha_0(x)\in L^r(\R^2)
$$
with any $\frac{4}{3}<r<2$. Hence, application of Proposition \ref{prop1} gives again that the second term belongs to $W^1_{t}(\R^2),\ \frac{4}{3}< t<2$.
The rest terms belong to $W^1_{2,\sigma}(\R^2)$ with any $\sigma>0$. It follows from the fact that the integrand in the inverse Fourier transform belongs to $L^2(\R^2)$. Then the  application of the Hausdorff-Young inequality for Fourier transform and Proposition \ref{prop1} gives us that $w_1 \in W^1_{r}(\R^2),r<2.$

We now turn to the integral in $w_2$ which can be investigated somehow by the same manner as $w_1$. Indeed, let us first write the integral in $x$ as follows
\begin{align*}
\int_{\R^2}{\big(\tilde{\alpha}_0(x) + \frac{\alpha(x)}{|\xi|}\ohbig(1)\big)\partial^{-1}\big(\rme^{\rmi(x,\xi)}\partial\bar{\partial}^{-1}\big(\alpha_0(x) + \sum\limits_{j=1}^{\infty}\frac{\alpha_j(x)}{|\xi|^j}\big)\big) \rmd x}\\=
\int_{\R^2}{\tilde{\alpha}_0(x)\partial^{-1}\big(\rme^{\rmi(x,\xi)}\partial\bar{\partial}^{-1}\alpha_0(x)\big) \rmd x} \\+
\frac{1}{|\xi|}\int_{\R^2}{\tilde{\alpha}_0(x)\partial^{-1}\big(\rme^{\rmi(x,\xi)}\partial\bar{\partial}^{-1}\sum\limits_{j=1}^{\infty}\frac{\alpha_j(x)}{|\xi|^{j-1}}\big) \rmd x}\\+
\frac{1}{|\xi|} \int_{\R^2}{\alpha(x)\ohbig(1)\partial^{-1}\big(\rme^{\rmi(x,\xi)}\partial\bar{\partial}^{-1}\alpha_0(x)\big) \rmd x}\\ +
\frac{1}{|\xi|^2} \int_{\R^2}{\ohbig(1)\alpha(x)\partial^{-1}\big(\rme^{\rmi(x,\xi)}\partial\bar{\partial}^{-1}\sum\limits_{j=1}^{\infty}\frac{\alpha_j(x)}{|\xi|^{j-1}}\big) \rmd x}.
\end{align*}

Each integral above can be treated (as the first one) as follows. According to Proposition \ref{prop1} we have
\begin{align*}
\int_{\R^2}{\tilde{\alpha}_0(x)\partial^{-1}\big(\rme^{\rmi(x,\xi)}\partial\bar{\partial}^{-1}\alpha_0(x)\big) \rmd x}\\&= 
-\frac{1}{\pi}\int_{\R^2}{\tilde{\alpha}_0(x)\int_{\R^2}{\frac{\rme^{\rmi(\eta,\xi)}\partial\bar{\partial}^{-1}\alpha_0(\eta)}{\bar{\eta}-\bar{x}} \rmd\eta}\rmd x}\\&=
-\frac{1}{\pi}\int_{\R^2}{\rme^{\rmi(\eta,\xi)}\partial\bar{\partial}^{-1}\alpha_0(\eta)\int_{\R^2}{\frac{\tilde{\alpha}_0(x)}{\bar{\eta}-\bar{x}} \rmd x}\rmd\eta}\\&=
-\int_{\R^2}{\rme^{\rmi(\eta,\xi)}\partial\bar{\partial}^{-1}\alpha_0(\eta)\partial^{-1}\tilde{\alpha}_0(\eta)\rmd\eta}\\&=
-\F\Big(\partial\bar{\partial}^{-1}\alpha_0(x)\partial^{-1}\tilde{\alpha}_0(x)\Big)(\xi).
\end{align*}
We used the Fubini's theorem since
$$
\int_{\R^2}{|\tilde{\alpha}_0(x)|\int_{\R^2}{\Big|\frac{\partial\bar{\partial}^{-1}\alpha_0(\eta)}{\bar{\eta}-\bar{x}}\Big| \rmd\eta}\rmd x}<\infty.
$$
Hence,

\begin{align*}
w_2 &=\frac{c}{\bar{z}}\ast \partial\bar{\partial}^{-1}\alpha_0(x)\partial^{-1}\tilde{\alpha}_0(x)+\frac{c}{\bar{z}}\ast \F^{-1}(\chi-1)\ast \partial\bar{\partial}^{-1}\alpha_0(x)\partial^{-1}\tilde{\alpha}_0(x)\\&\quad+
\frac{c}{\bar{z}}\ast\F^{-1}\Big(\frac{\chi(\xi)}{|\xi|} \int_{\R^2}{\rme^{\rmi(x,\xi)}\partial\bar{\partial}^{-1}\sum\limits_{j=1}^{\infty}\frac{\alpha_j(x)}{|\xi|^{j-1}}\partial^{-1}\tilde{\alpha}_0(x)\rmd x}\Big)(x)\\&\quad+
\frac{c}{\bar{z}}\ast\F^{-1}\Big(\frac{\chi(\xi)}{|\xi|} \int_{\R^2}{\rme^{\rmi(x,\xi)}\partial\bar{\partial}^{-1}\alpha_0(x)\partial^{-1}\sum\limits_{j=1}^{\infty}\frac{\tilde{\alpha}_j(x)}{|\xi|^{j-1}}\rmd x}\Big)(x)\\&\quad+
\frac{c}{\bar{z}}\ast\F^{-1}\Big(\frac{\chi(\xi)}{|\xi|^2} \int_{\R^2}{\rme^{\rmi(x,\xi)}\partial\bar{\partial}^{-1}\sum\limits_{j=1}^{\infty}\frac{\alpha_j(x)}{|\xi|^{j-1}}\partial^{-1}\sum\limits_{j=1}^{\infty}\frac{\tilde{\alpha}_j(x)}{|\xi|^{j-1}}\rmd x}\Big)(x).
\end{align*}
Now applying Proposition \ref{prop1}, Proposition \ref{prop2} and Sobolev embedding theorem we have that the product $\partial\bar{\partial}^{-1}\alpha_l(x)\partial^{-1}\tilde{\alpha}_m(x),\ l,m=0,1,\ldots$ belongs to $L^q(\R^2)$ for any $q<2$. Further steps and the results are completely the same as for $w_1$. The term $I_3^{''}$ can be estimated by the same manner as the term $I_3^{'}$ if we assume the same additional condition for $\partial_sh(x,e_0(1+s))|_{s=0}$ as we have assumed for $h(x,e_0)$ (these two conditions are very connected to each other). Thus,
$$
q_{B,1}^f(x)-h_0(x) \in C^\infty(\R^2)+ H^t(\R^2)+W^1_r(\R^2)
$$
with $t<1$ and $r<2$. But $W^1_r(\R^2)$ for $r<2$ is embedded in $H^t(\R^2)$ for $t<1$, therefore
 Lemma \ref{lemma3} is completely proved.
\end{proof} 

Now we are in the position to prove the main theorem, i.e. Theorem \ref{thm2}.

\begin{proof} Since
$$
q_B^f(x)-h_0(x)=q_B^f(x)-q_{B,1}^f(x)+q_{B,1}^f(x)-h_0(x)
$$
then Lemma \ref{lemma2} and Lemma \ref{lemma3} imply that 
$$
q_B^f(x)-h_0(x)\in H^t(\R^2)
$$
with any $t<1$ by modulo $C^{\infty}(\R^2)$-functions.  Thus, Theorem \ref{thm2} is proved.
\end{proof}

\bigskip

\section*{Acknowledgments}

This work was supported by the Academy of Finland (Application No. 250215, Finnish Programme for Centres of Excellence in Research 2012-2017).

\bigskip


\begin{thebibliography}{99}

\bibitem{beals-coifman}
\newblock R. Beals and R. Coifman,
\newblock\emph{Multidimenisonal inverse scattering and nonlinear PDE},
\newblock Proc. Symp. Pure Math. \textbf{43}, Providence: Am. Math. Soc. (1985), 45--70.

\bibitem{blasten}
\newblock E. Bl\aa sten,
\newblock\emph{Stability and uniqueness for the inverse problem of the Schr\"odinger equation in 2D with potentials in $W^{\varepsilon,p}$} arXiv:1106.0632 (2011).

\bibitem{bukhgeim}
\newblock A. L. Bukhgeim,
\newblock \emph{Recovering a potential from Cauchy data in the two-dimensional case},
\newblock J. Inverse Ill-Posed Probl. \textbf{16}, (2008), 19--33.

\bibitem{faddeev}
\newblock L. D. Faddeev,
\newblock\emph{Increasing solutions of the Schr\"{o}dinger equation},
\newblock Sov. Phys. Dokl. \textbf{10} (1966), 1033--1035.

\bibitem{FHSe} G. Fotopoulos, M. Harju and V. Serov, \emph{Inverse fixed angle scattering and backscattering for a nonlinear Schr\"odinger equation in 2D}, Inverse Problems and Imaging \textbf{7} (2013), 183-197.

\bibitem{GT} D. Gilbarg and N. S. Trudinger, \emph{Elliptic Partial Differential Equations of Second Order}, (1983), 2nd edn (Berlin:Springer).

\bibitem{grinevich-novikov} P. G. Grinevich and R. G. Novikov. \emph{Transparent potentials at fixed energy in dimension two. Fixed-energy dispersion relations for the fast decaying potentials}, Comm. Math. Phys. \textbf{174} (1995), 409--446.

\bibitem{imanuvilov-yamamoto}
\newblock O. Y. Imanuvilov and M. Yamamoto,
\newblock\emph{Inverse boundary value problem for Schr\"odinger
equation in two dimensions}, arXiv:1105.2850v1 (2011).

\bibitem{nachman}
\newblock A. I. Nachman,
\newblock\emph{Global uniqueness for a two-dimension inverse boundary value problem},
\newblock Annals of Mathematics \textbf{142} (1995), 71--96.

\bibitem{nachman-ablowitz}
\newblock A. I. Nachman and M. Ablowitz
\newblock\emph{A multidimesional inverse scattering method},
\newblock Studies in Appl. Math. \textbf{71} (1984), 243--250.

\bibitem{nirenberg-walker}
\newblock L. Nirenberg and H. Walker
\newblock\emph{Null spaces of elliptic partial differential operators in $\R^n$},
\newblock J. Math. Anal. Appl. \textbf{42} (1973), 271--301.

\bibitem{novikov}R. G. Novikov, \emph{Multidimensional inverse spectral problem for the equation $-\Delta\psi+(v(x)-Eu(x))\psi=0$}, Funct. Anal. Appl. \textbf{22} (1988), 263--272.

\bibitem{serov,paivarinta}
\newblock V. Serov and L. P\"{a}iv\"{a}rinta,
\newblock\emph{New estimates of the Green--Faddeev function and recovering of singularities in the two-dimensional Schr\"{o}dinger operator with fixed energy},
\newblock Inverse Problems \textbf{21} (2005), 1291--1301.

\bibitem{serov2012} 
\newblock V. Serov, 
\newblock\emph{Inverse fixed energy scattering problem for the generalized nonlinear Schr\"odinger operator},
\newblock Inverse Problems \textbf{28} (2012), 025002.

\bibitem{SeHF}
\newblock V. Serov, M. Harju and G. Fotopoulos,
\newblock\emph{Direct and inverse scattering for nonlinear Schr\"odinger equation in 2D},
\newblock J. Math. Phys. \textbf{53} (2012) 123522 .

\bibitem{sun-uhlmann} 
\newblock Z. Sun and G. Uhlmann,
\newblock\emph{Recovery of singularities for formally determined inverse problems},
\newblock Comm. Math. Phys \textbf{153} (1993), 431--445.

\bibitem{Sy1} J. Sylvester, \emph{The Cauchy data and the scattering amplitude}, Comm. Part. Diff. Equ. \textbf{19} (1994), 1735-1741. 

\bibitem{Sy2} J. Sylvester, \emph{Inverse boundary value problems. An Overview}, Algebra i Analiz \textbf{8} 1996, 195-204.

\bibitem{Z} E. Zeidler, \emph{Applied Functional Analysis: Part I. Applications to Mathematical Physics}, (1995), (New York: Springer), 18-26.


\end{thebibliography}
\end{document}